\documentclass[12pt]{article}

\usepackage[colorlinks=false]{hyperref}
\usepackage[dvips]{graphicx}
\usepackage[usenames,dvipsnames]{color}
\usepackage{epsfig}
\usepackage{rotating}
\usepackage{amssymb}
\usepackage{amsmath,amsfonts}
\usepackage{multirow}
\usepackage[latin1]{inputenc}
\usepackage{verbatim}
\usepackage[tableposition=top,font=small,labelfont=bf,format=hang]{caption}
\usepackage{booktabs}

\newtheorem{lem}{Lemma}
\newtheorem{thm}{Theorem}
\newtheorem{rem}{Remark}
\newtheorem{examp}{Example}

\newcommand{\Mu}{\mathcal{M}}

\newcommand{\F}{\mathbb{F}}

\setcounter{MaxMatrixCols}{20}

\newenvironment{proof}{\noindent\textbf{Proof.}\quad}
{\hspace{\stretch{1}}%
\rule{1ex}{1ex}\\}

\setcounter{page}{1}

\begin{document}
 \author{ Minjia Shi\thanks{smjwcl.good@163.com}, Shukai Wang\thanks{wangshukai\_2017@163.com},  Jon-Lark Kim\thanks{jlkim@sogang.ac.kr},
Patrick Sol\'e\thanks{{sole@enst.fr}}
\thanks{Minjia Shi and Shukai Wang are with School of Mathematical Sciences, Anhui University, Hefei, 230601, China. Jon-Lark Kim is with Sogang University, Seoul, South Korea. Patrick Sol\'e is with Aix Marseille Univ, CNRS, Centrale Marseille, I2M, Marseille, France. }}
\title{Self-orthogonal codes over a non-unital ring\\ and \\ combinatorial matrices\thanks{This research is supported by National Natural Science Foundation of China (12071001, 61672036), Excellent Youth Foundation of Natural Science Foundation of Anhui Province (1808085J20), the Academic Fund for Outstanding Talents in Universities (gxbjZD03).}}
\date{}
\maketitle
\begin{abstract}
There is a  local ring $E$ of order $4,$ without identity for the multiplication, defined by generators and relations as
 $E=\langle a,b \mid 2a=2b=0,\, a^2=a,\, b^2=b,\,ab=a,\, ba=b\rangle.$
 We study a special construction of self-orthogonal codes over $E,$ based on combinatorial matrices related to two-class association schemes,
  Strongly Regular Graphs (SRG), and Doubly Regular Tournaments (DRT).
  We construct quasi self-dual codes over $E,$ and Type IV codes,
 that is, quasi self-dual codes whose all codewords have even Hamming weight.
 All these codes can be represented as formally self-dual additive codes over $\F_4.$ The classical invariant theory bound for the weight enumerators
 of this class of codes
 improves the known bound on the minimum distance of Type IV codes over $E.$
 %We study the weight enumerators of these codes by means of invariant theory, and classify them in short lengths.
  %As an application we construct formally self-dual nonlinear $\Z_4$-codes.
\end{abstract}
{\bf Keywords:} {rings, codes, formally self-dual codes, Type IV codes. }\\
{\bf MSC(2010):} Primary 94 B05, Secondary 16 A10.
\section{Introduction}
%%%%%%%%%%%%%%%%%%%%%%%%%%%%%%%%%%%%%

Since the the celebrated theorem of Gleason and Prange \cite{GP}, formally self-dual codes over $\F_4$ with even weights, also known as Type IV codes have been studied extensively \cite[Chap. 19]{MS},
\cite{NRS}.
In \cite{D+} this notion was extended over the three rings of order four that are not a field, namely
 $\mathbb Z_4$, $\mathbb F_2 + u \mathbb F_2$, and $\mathbb F_2 + v \mathbb F_2$.
Recently, a further extension was accomplished over a non commutative non-unital ring  in~\cite{E}. The concept of self-dual code is replaced there by quasi self-dual (QSD) code
that is self-orthogonal of length $n,$ with $2^n$ codewords. Type IV codes are then defined as QSD codes, whose Hamming weights of all codewords are even.
With every linear $E$-code is attached an additive $\F_4$-code obtained by forgetting the ring structure; this allows to use the additive codes package of \cite{DS} for numerical computations.
Kim and Ohk~\cite{KimOhk} showed that quasi self-dual codes over that  ring $E$ can be applied to DNA codes in the sense that the GC-content concept can be described
by a multiple of an element in the ring. They also improved the classification of QSD codes over $E$ up to lengths $8.$ The Lee weight defined below is based on this DNA application.

In this paper, we study a special construction of QSD codes over $E,$ based on combinatorial matrices related to two-class association schemes,
  Strongly Regular Graphs (SRG), and Doubly Regular Tournaments (DRT). This is a generalization from fields to rings of the approach of \cite{T}.
  We construct QSD codes and Type IV codes over $E.$ Along the way, we improve the upper bound on the minimum distance of Type IV codes from \cite{E} by a multiplicative factor,
  by an application of the classical invariant bound for the minimum distance of extremal Type IV codes over $\F_4.$
  Some numerical results validate our approach.

The material is arranged in the following way. Section 2 collects the notions and notations required for the rest of the paper. Section 3 studies our special construction.
Section 4 develops the needed theory of combinatorial matrices from designs, SRGs and DRTs. Section 5 concludes the article.

\section{Background}
%%%%%%%%%%%%%%%%%%%%%%%%%%%%%%%%%%%%%%%%%%%%%%%%
\subsection{Binary codes}
Denote by $wt(x)$ the Hamming weight of $x \in \F_2^n.$
The dual of a binary linear code $C$ is denoted by $C^\bot$ and defined as
$$C^{\bot}=\{ y \in \F_2^n \mid \forall x \in C, \,( x,y)=0\} ,$$
where $(x,y)=\sum_{i=1}^nx_iy_i,$ denotes the standard inner product.
A code $C$ is {\bf self-orthogonal} if it is included in its dual: $C \subseteq C^{\bot}.$
Two binary codes are {\bf equivalent} if there is a permutation of coordinates that maps one to the other.
%%%%%%%%%%%%%%%%%%%%%%%%%%%%%%%%%%%%%%%%%%%%%%%
\subsection{Quaternary codes}
An {\bf additive code} of length $n$ over $\F_4$ is an additive subgroup of $\F_4^n$. It is a free $\F_2$ module with $4^k$ elements for some $k\le n$
(here $2k$ is an integer, but $k$ may be half-integral).  Using a {\bf generator matrix} $G$, such a code can be represented as the $\F_2$-span of its rows.
With every linear $E$ code $C$ is attached an {\bf additive} $\F_4$ code $\phi(C)$  by the substitution
$$0 \to 0,a \to \omega, b\to \omega^2 , c\to 1, $$ where $\F_4=\F_2[\omega].$
Note that the reverse substitution attaches to every additive $\F_4$ code an additive subgroup of $E^n,$ which may or may not be linear.

Besides the Hamming weight of a vector, we might consider its Lee weight as follows:
 $$wt_L(0)=0, wt_L(a)=wt_L(b)=1, wt_L(c)=2.$$

%%%%%%%%%%%%%%%%%%%%%%%%%%%%%%%%%%%%%%%%%%
\subsection{Ring theory}
Consider the ring of order $4$ defined by two generators $a$ and $b$ by the relations
$$E=\langle a,b \mid 2a=2b=0,\, a^2=a,\, b^2=b,\,ab=a,\, ba=b\rangle.$$
The ring $E$ is a non unital, non-commutative ring of order $4,$ of characteristic two\cite{E, F}.
Thus, $E$  consists of four elements $E=\{0,a,b,c\},$ with $c=a+b.$
Its multiplication table is as follows.
$$\begin{array}{|c|c|c|c|c|}
  \hline
  % after \\: \hline or \cline{col1-col2} \cline{col3-col4} ...
 \times & 0 & a&  b& c\\\hline
  0 &  0&  0&  0& 0\\\hline
  a &  0& a & a & 0 \\ \hline
  b&  0& b & b& 0 \\ \hline
  c & 0& c & c &0\\
  \hline
\end{array}$$
From this table, we deduce that
this ring is not commutative, and non-unital. It is local with maximal ideal $J=\{0,c\},$
and residue field $E/J=\F_2=\{0,1\},$ the finite field of order $2.$

Denote by $\alpha : E \rightarrow  E/J=\F_2$, the map of reduction modulo $J$. Thus
$\alpha(0) = \alpha(c) = 0$, and $\alpha(a) = \alpha(b) = 1$. This map is extended in the natural way in a map from $E^n$ to $\mathbb F_2^n$.

%%%%%%%%%%%%%%%%%%%%%%%%%%%%%%%%%%%%%%%%%%
\subsection{Codes over $E$}

A {\bf linear} $E$-code of length $n$ is a one-sided $E$-submodule of $E^n.$
Let $C$ be a code of length $n$ over $E.$ With that code we associate two binary codes of length $n:$
\begin{enumerate}
 \item [(1)] the {\bf residue code} defined by
$res(C)=\{\alpha(y) \mid y \in C\},$
 \item [(2)]  the {\bf torsion code} defined by
$tor(C)=\{x \in \F_2^n \mid  cx \in C\}.$
\end{enumerate}

The {\bf right dual} $C^{\bot_R}$ of $C$ is the right module defined by

$$C^{\bot_R}=\{ y \in E^n \mid \forall x \in C, \,( x,y)=0\} . $$

The {\bf left dual} $C^{\bot_L}$ of $C$ is the left module defined by

$$C^{\bot_L}=\{ y \in E^n \mid \forall x \in C, \, ( y,x)=0\} . $$

An $E$-code $C$ is  {\bf self-orthogonal} if
 $$ \forall x,y \in C, (x,y)=0.$$
 Clearly, $C$ is  {\bf self-orthogonal} if and only if $C\subseteq C^{\bot_L}.$ Likewise, $C$ is  {\bf self-orthogonal} if and only if $C\subseteq C^{\bot_R}.$
 Thus, for a self-orthogonal code $C,$ we always have $C\subseteq C^{\bot_L} \cap C^{\bot_R}.$
 An $E$-code of length $n$ is  {\bf Quasi Self-Dual} (QSD for short ) if it is self-orthogonal and of size $2^n.$
A QSD code is {\bf Type IV} if all its codewords have even weight.

The following result went unnoticed in \cite{E}, and improves on the previously known upper bound $d
\le 2 \lfloor \frac{n+2}{4} \rfloor$ for the minimum Hamming distance $d$ of a Type IV $E$-code of length $n$.

{\thm If $C$ is a Type IV $E$-code of length $n$, then it is formally self-dual for the Hamming weight enumerator, and its minimum distance is $\le 2(\lfloor \frac{n}{6}\rfloor+1).$}

\begin{proof} The first statement follows by specialization of variables in the MacWilliams relation for the joint weight enumerator of the residue and torsion code \cite[Prop. 2]{E}.
 The second statement follows by the standard argument used to prove the same bound for Type IV codes over $\F_4$ \cite[Chap. 19, (69)]{MS}. Note that the Hamming weight enumerator
 of a Type IV code over $E$
 belongs to the same ring of
 invariants as that of a Type IV code over $\F_4.$
\end{proof}

Theorem 1 gives a construction of additive formally self-dual even codes over $\F_4.$
{\coro If $C$ is Type IV then $\phi(C)$ is an additive formally self-dual even code.}

\begin{proof}
 The results follow by the fact that $C$ and $\phi(C)$ have the same Hamming weight enumerator.
\end{proof}

We now study the residue and torsion code of a QSD code over $E.$

\begin{thm}[\cite{E}]
For any QSD $E$-linear code $\mathcal{C}$, we have
\begin{enumerate}
\renewcommand{\labelenumi}{$(\theenumi)$}
\item $res(\mathcal{C}) \subseteq {res(\mathcal{C})}^{\perp}$,

\item $tor(\mathcal{C}) = {res(\mathcal{C})}^{\perp}$,

\item $dim(\mathcal{C}) = dim(res(\mathcal{C})) + dim(tor(\mathcal{C}))$.
\end{enumerate}
\end{thm}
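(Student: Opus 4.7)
The overall plan is to prove (1) from the ring-homomorphism property of $\alpha$, to obtain (3) via a short exact sequence valid for any linear $E$-code, and then to deduce (2) by combining these with the QSD size identity $|\mathcal{C}|=2^n$. For (1), since $J=\{0,c\}$ is a two-sided ideal, the reduction $\alpha : E \to E/J = \F_2$ is a ring homomorphism, and its coordinatewise extension satisfies $\alpha\bigl(\sum_i x_i y_i\bigr) = \sum_i \alpha(x_i)\alpha(y_i)$ for all $x,y \in E^n$. Applied to $x,y \in \mathcal{C}$ with $(x,y)=0$, this yields $(\alpha(x),\alpha(y))=0$ in $\F_2$, proving $res(\mathcal{C})\subseteq res(\mathcal{C})^{\perp}$.

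For (3), I identify $J^n$ with $\F_2^n$ via $c\leftrightarrow 1$. The map sending a binary vector $x$ to the $E$-vector $\tilde{x}$ having $c$'s on the support of $x$ is then a bijection between $tor(\mathcal{C})$ and $\mathcal{C}\cap J^n$, by the very definition of $tor$. On the other hand, the kernel of $\alpha : \mathcal{C}\to res(\mathcal{C})$ is precisely $\mathcal{C}\cap J^n$, since $\alpha(y_i)=0$ forces $y_i \in \{0,c\}$. Thus
$$0 \longrightarrow \mathcal{C}\cap J^n \longrightarrow \mathcal{C} \stackrel{\alpha}{\longrightarrow} res(\mathcal{C}) \longrightarrow 0$$
is a short exact sequence of $\F_2$-spaces, yielding $\dim\mathcal{C} = \dim tor(\mathcal{C}) + \dim res(\mathcal{C})$, which is (3).

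For (2), take $x\in tor(\mathcal{C})$, so $\tilde{x}\in\mathcal{C}$, and let $y\in\mathcal{C}$. From the multiplication table, $c\cdot y_i = c$ exactly when $y_i\in\{a,b\}$, equivalently when $\alpha(y_i)=1$; hence $(\tilde{x},y) = c\cdot(x,\alpha(y))$ modulo $2c=0$. Self-orthogonality forces $(\tilde{x},y)=0$, so $(x,\alpha(y))=0$ in $\F_2$ for every $y\in\mathcal{C}$, i.e.\ $x\in res(\mathcal{C})^{\perp}$. This proves $tor(\mathcal{C})\subseteq res(\mathcal{C})^{\perp}$; the QSD identity $|\mathcal{C}|=2^n$ together with (3) then forces $|tor(\mathcal{C})|=2^{n-\dim res(\mathcal{C})}=|res(\mathcal{C})^{\perp}|$, and the inclusion becomes an equality.

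The main subtlety I anticipate is the non-commutativity of $E$: the pairing $(\tilde{x},y)=\sum_i \tilde{x}_i y_i$ uses the left-multiplication $c\cdot y_i$, whereas $y_i\cdot c$ is identically zero, so only one of the two orderings carries information. The excerpt's observation that self-orthogonality is equivalent to containment in either the left or the right dual reassures us that we may pair in either order; we need only be consistent and use the informative side. Once this bookkeeping is handled and the bijection $tor(\mathcal{C})\leftrightarrow \mathcal{C}\cap J^n$ is set up cleanly, the remainder reduces to a routine counting argument.
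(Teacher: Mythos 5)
Your proof is correct: the ring-homomorphism property of $\alpha$ gives (1), the short exact sequence $0\to\mathcal{C}\cap J^n\to\mathcal{C}\to res(\mathcal{C})\to 0$ together with the bijection $tor(\mathcal{C})\leftrightarrow\mathcal{C}\cap J^n$ gives (3), and the computation $(\tilde{x},y)=c\,(x,\alpha(y))$ (using the informative left-multiplication by $c$) plus the counting argument $|\mathcal{C}|=2^n$ gives (2). The paper itself states this theorem without proof, citing \cite{E}; your argument is the standard one used there, so there is nothing to add.
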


We can characterize QSD codes over $E$ amongst linear codes over $E$ as a function of their residue code in the following theorem.

\begin{thm}[\cite{E}]\label{E}
Let $B$ be a self-orthogonal binary $\left[n,k_1\right]$ code, where $0 \le k_1 \le n/2$. The code $\mathcal{C}$ over the ring $E$ defined by the relation
\begin{equation*}
\mathcal{C}=aB+cB^{\perp}
\end{equation*}
is a QSD code. Its residue code is $B$ and its torsion code is $B^{\perp}$. Conversely, any QSD code $C$ can be built in that way by taking for $B$ the residue code of $C.$
\end{thm}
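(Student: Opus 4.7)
The plan is to prove the forward direction by a direct coordinate-wise computation against the multiplication table of $E$, and the converse by combining Theorem 2 with a uniqueness argument forced by self-orthogonality.

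For the forward direction, I would parametrize $\mathcal{C}$ by the map $(u,v) \mapsto au + cv$ from $B \times B^\perp$ to $\mathcal{C}$. A case analysis on the four possible values of $(u_i, v_i) \in \F_2^2$ shows that this map is injective---specifically, $(au+cv)_i = 0$ only when $u_i = v_i = 0$---which immediately yields $|\mathcal{C}| = 2^{k_1} \cdot 2^{n-k_1} = 2^n$. Reading off $\alpha$ coordinate by coordinate gives $\alpha(au + cv) = u$, so $res(\mathcal{C}) = B$; and $cx \in \mathcal{C}$ forces the parameters to be $u = 0$, $v = x$, yielding $tor(\mathcal{C}) = B^\perp$. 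Self-orthogonality is the computation where the ring structure matters most: using $a^2 = a$, $ac = 0$, $ca = c$, $c^2 = 0$, a coordinate-wise expansion produces the (noncommutative) identity
\[
(au_1 + cv_1,\ au_2 + cv_2) \;=\; a(u_1, u_2) + c(v_1, u_2),
\]
whose two terms vanish by self-orthogonality of $B$ and by $v_1 \in B^\perp$, $u_2 \in B$, respectively.

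For the converse, take a QSD code $C$ and set $B := res(C)$. Theorem 2 supplies $B \subseteq B^\perp$ and $tor(C) = B^\perp$. Every $r \in C$ with $\alpha(r) = u$ has coordinates lying in $\{a,b\}$ or $\{0,c\}$ according to $u_i$, forcing a unique decomposition $r = au + cv_0$ for some $v_0 \in \F_2^n$. Substituting into the identity above and letting the second argument range over $C$ yields $c(v_0, u') = 0$ for every $u' \in B$, so $v_0 \in B^\perp$. Hence $C \subseteq aB + cB^\perp$, and equality follows from the cardinality computation of the forward direction.

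The main obstacle is the asymmetry $ac = 0 \ne c = ca$: it makes the $E$-valued inner product noncommutative and produces the lopsided formula above. It is precisely this asymmetry that pins down $v_0 \in B^\perp$ in the converse; a symmetric formula would only constrain a combination of $v_0$ and $v'_0$, which would not suffice to identify $C$ with $aB + cB^\perp$.
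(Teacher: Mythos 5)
The paper does not prove this theorem itself --- it is quoted from reference [E] --- so there is no in-paper proof to compare against; judged on its own terms, your argument is correct. The coordinate-wise case analysis $(u_i,v_i)\mapsto 0,\,a,\,c,\,b=a+c$ does give injectivity and hence $|\mathcal{C}|=2^n$, and the identity $(au_1+cv_1,\,au_2+cv_2)=a(u_1,u_2)+c(v_1,u_2)$ is exactly what the relations $a^2=a$, $ac=0$, $ca=c$, $c^2=0$ produce; it correctly yields self-orthogonality in the forward direction and, since $c\neq 0$, forces $(v_0,u')=0$ for all $u'\in res(C)$ in the converse, which together with Theorem 2 and the cardinality count identifies $C$ with $aB+cB^{\perp}$. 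Your closing remark about the asymmetry $ac=0\neq ca$ is also the right diagnosis of why the argument works. The only item you leave unverified is that $aB+cB^{\perp}$ is a one-sided $E$-submodule of $E^n$, which is part of being a QSD code since QSD codes are by definition linear $E$-codes. This is a one-line check from the same multiplication table, but it is not entirely vacuous: closure under left multiplication by $b$ gives $b(au+cv)=bu=au+cu$, and placing $cu$ in $cB^{\perp}$ genuinely uses the hypothesis $B\subseteq B^{\perp}$, so it deserves to be written down.
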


By Theorem \ref{E}, we know that the classification of QSD $E$-codes is equivalent to the classification of their residue codes. Moreover, the following result is straightforward, but useful. The easy proof is ommitted.

\begin{thm}\label{thm-min-dinstance}
The minimum distance $d(C)$ of a QSD code $C$ defined by $\mathcal{C}=aB+cB^{\perp}$, where $B$ is a self-orthogonal binary code, is less than or equal to $\min \{d(B), d(B^{\perp}) \}$. If $B$ is a self-dual binary code, then $d(C)=d(B)$.

\end{thm}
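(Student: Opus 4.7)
My plan is to unfold the definition of $\mathcal{C} = aB + cB^\perp$ and reason directly at the level of codewords. A general codeword has the form $ax + cy$ with $x \in B$ and $y \in B^\perp$, and for each coordinate $i$ the entry $ax_i + cy_i$ takes the values $0, a, c, b$ according to the four possibilities for $(x_i,y_i) \in \F_2^2$. In particular, the support of $ax+cy$ as a vector over $E$ is exactly $\mathrm{supp}(x) \cup \mathrm{supp}(y)$, so the Hamming weight satisfies
\[
\mathrm{wt}(ax+cy) = |\mathrm{supp}(x) \cup \mathrm{supp}(y)|.
\]
This single identity will carry both parts of the theorem.

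For the upper bound, I would simply exhibit two families of codewords. Taking $y=0$, the word $ax$ lies in $\mathcal{C}$ and has Hamming weight equal to $\mathrm{wt}(x)$, which can be as small as $d(B)$. Taking $x=0$, the word $cy$ lies in $\mathcal{C}$ and has Hamming weight equal to $\mathrm{wt}(y)$, which can be as small as $d(B^\perp)$. Hence $d(C) \le \min\{d(B),d(B^\perp)\}$.

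For the self-dual case we have $B = B^\perp$, so already $d(C) \le d(B)$. For the reverse inequality, let $ax+cy$ be any nonzero codeword with $x,y \in B$. If $x = 0$, then $y \ne 0$ and $\mathrm{wt}(cy) = \mathrm{wt}(y) \ge d(B)$. If $x \ne 0$, then by the support identity above,
\[
\mathrm{wt}(ax+cy) = |\mathrm{supp}(x) \cup \mathrm{supp}(y)| \ge |\mathrm{supp}(x)| = \mathrm{wt}(x) \ge d(B).
\]
Combining the cases gives $d(C) \ge d(B)$, hence equality.

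I do not foresee any genuine obstacle: the argument is essentially a coordinate-wise bookkeeping exercise once one observes that the map $(x,y) \mapsto ax+cy$ preserves the union of supports on the nose. The only subtlety worth flagging is that in the general (merely self-orthogonal) case one cannot upgrade the upper bound to an equality, because a cross term $ax+cy$ with $x \ne 0 \ne y$ can only enlarge, never shrink, the support, so the minimum is always attained on the pure pieces $aB$ or $cB^\perp$; this is precisely what makes the self-dual specialization collapse to $d(B)$.
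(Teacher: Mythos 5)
Your proof is correct. The paper omits this proof as ``easy,'' and your argument --- observing that every codeword has the form $ax+cy$ with $x\in B$, $y\in B^{\perp}$, that coordinatewise $(x_i,y_i)=(1,1)$ gives $a+c=b\neq 0$ so the support of $ax+cy$ is exactly $\mathrm{supp}(x)\cup\mathrm{supp}(y)$, and then specializing to the pure codewords $ax$ and $cy$ for the upper bound and using $|\mathrm{supp}(x)\cup\mathrm{supp}(y)|\geq\max\{\wt(x),\wt(y)\}$ for the self-dual case --- is precisely the intended argument, and your closing remark that the minimum is in fact always attained on the pure pieces (so the inequality is really an equality whenever both $B$ and $B^{\perp}$ are nonzero) is a correct and worthwhile strengthening.
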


\section{Construction}\label{sec:3}

Consider the code $C(M)$ of length $2n$ with a generator matrix of the form
$$G=(xI,yM)$$
where $x, y \in E,$ $I$ is the identity matrix, and $M$ is a binary matrix satisfying

$$MM^T=\lambda I +\mu J+\nu M,$$
where $\lambda, \mu, \nu \in \F_2,$ and $J$ is the all-one matrix.

\begin{table}[!hbpt]
\centering
\caption{Conditions of self-orthogonal codes}\label{tab:11}
\begin{tabular}{|c|c|c|c|}
\hline
$n$ & $\lambda$ & $\mu$ & $\nu$ \\ \hline
 any & 1 & 0 & 0  \\ \hline
 odd & 1 & 1 & 0  \\ \hline
 any & 0 & 0 & 1  \\ \hline
 even & 0 & 1 & 1  \\ \hline
 odd & 0 & 1 & 0  \\ \hline
 any & 0 & 0 & 0  \\ \hline
\end{tabular}
\end{table}

{\thm\label{th:55} The code $C(M)$ is self-orthogonal if and only if either $x,y\in\{0,c\}$, or $y\in\{a,b\}$ and the three parameters $\lambda,\mu,\nu$ are as in Table $\ref{tab:11}$.}

\begin{proof}
The code $ C(M)$ is self-orthogonal if and only if $GG^\mathrm{T}=0.$

If $y\in \{0,c\}$, $GG^\mathrm{T}=0$ implies $x\in\{0,c\}$.
It is trivial because the code only has a zero codeword.

If $y\in\{a,b\}$,
then
$$\begin{aligned}
GG^\mathrm{T}&=x^2I+y^2MM^\mathrm{T}\\
&=x^2I+y(\lambda I+\mu J+\nu M)\\
&=x^2I+y\lambda I+y\mu J+y\nu M.
\end{aligned}
$$
Therefore, $GG^\mathrm{T}=0$ if and only if $-y\nu M=x^2I+y\lambda I+y\mu J.$
\\
Since $$
\begin{aligned}
(-y\nu M)(-y\nu M)^\mathrm{T}&=y\nu MM^\mathrm{T}\\
[(x^2+y\lambda)I+y\mu J][(x^2+y\lambda)I+y\mu J]^\mathrm{T}&=y\nu(\lambda I+\mu J+\nu M)\\
(x^2+y\lambda)I+ny\mu J&=y\nu \lambda I+y\nu\mu J+y\nu M\\
&=y\nu\lambda I+y\nu\mu J-(x^2+y\lambda)I-y\mu J,
\end{aligned}
$$
then we have $(n+1-\nu)\mu yJ=y\nu\lambda I.$ Because $J$ is the all-one matrix, and $I$ is the identity matrix,
then
\begin{equation}
\left\{
\begin{array}{l}
y\nu\lambda=0\\
(n+1-\nu)\mu y=0.
\end{array}
\right.
\end{equation}
Thus, $\lambda,\mu,\nu \in \mathbb{F}_2$ are as in Table \ref{tab:11}.
\end{proof}

The next two results give conditions for $C(M)$ to be QSD (resp. Type IV).
{\thm\label{th:6} A self-orthogonal code $C(M)$ is QSD if and only if either $x\in\{a,b\}$, or $x\in\{0,c\}, y\in\{a,b\}, \lambda=\mu=\nu=0$ and $M$ is a full rank matrix spanning a self-orthogonal binary code.}

\begin{proof}
A self-orthogonal code $C$ is QSD if and only if $G$ has $n$ linearly independent rows.
If $x\in\{a,b\}$, $C$ is QSD because of the form of $G$.

If $x\in\{0,c\}$, then we must let the determinant $|yM|\neq0$ to make sure there are $n$ linearly independent rows in $G$.
From the proof of Theorem \ref{th:55}, $y\in\{a,b\}$ and $-y\nu M=y\lambda I+y\mu J$.
Then, $\lambda=\mu=\nu=0$, and $M$ is a binary matrix such that
\begin{equation}\label{eq:2}
\left\{
\begin{array}{l}
|M|\neq0,\\
MM^\mathrm{T}=0.
\end{array}
\right.
\end{equation}

This completes the proof.
\end{proof}

{\thm \label{th:7}A QSD code $C(M)$ is Type IV if either $x\in\{0,c\},$ or $x\in\{a,b\}$ and one of the following three conditions holds.
\begin{enumerate}
\renewcommand{\labelenumi}{$(\theenumi)$}
 \item $\lambda=\mu=0, \nu=1,$
 \item $\lambda=0, \mu=1, \nu=1,$
 \item $\lambda=1,\mu=\nu=0.$
\end{enumerate}

}
\begin{proof}
It easy to check that a QSD code is Type IV if the generator matrix $G$ has all the rows of even weights.
If $x\in\{0,c\}$, then $\lambda=\mu=\nu=0$ because of Theorem \ref{th:6}. From $MM^\mathrm{T}=0$ in Equation \ref{eq:2}, it is clear that $M$ has all the rows of even weights.

If $x\in\{a,b\}$, we just prove that $M$ has all the rows of odd weights in the three cases. Now we have
\begin{equation}
\left\{
\begin{array}{l}
MM^\mathrm{T}=\lambda I+\mu J+\nu M,\\
y\nu M=(x+y\lambda)I+y\mu J.
\end{array}
\right.
\end{equation}
\begin{enumerate}
\renewcommand{\labelenumi}{(\theenumi)}
 \item  $\lambda=\mu=0, \nu=1$.
In this case, we have
$\left\{
\begin{array}{l}
MM^\mathrm{T}=M,\\
yM=xI.
\end{array}
\right.$
Therefore, $x=y$ and $M=I$ with all rows of odd weights.
 \item  $\lambda=0, \mu=1, \nu=1$.
In this case, we have $\left\{
\begin{array}{l}
MM^\mathrm{T}=M+J,\\
yM=xI+yJ.
\end{array}
\right.$
Therefore, $x=y$ and $M=J-I$ with even $n$. So, $M$ has all rows of odd weights.
\item $\lambda=1,\mu=\nu=0$.
In this case, we have $\left\{
\begin{array}{l}
MM^\mathrm{T}=I\\
(x+y)I=0
\end{array}
\right.$.
Therefore, $x=y$ and $M$ has all rows of odd weights.
\end{enumerate}

This completes the proof.
\end{proof}

We now investigate the residue and torsion codes of $C(M)$.

%The generator matrix of residue code is $G_1=\alpha(G)$.
From \cite[Thm. 1]{E}, we write the generator matrix in the form
$$G=\left(\begin{array}{ccc}
aI_{k_1}&X&Y\\
0&cI_{k_2}&cZ
\end{array} \right).$$
For $x\neq0$, we have the following cases depending on the values of $x \in E.$
\begin{itemize}
 \item  If $x=a$ or $x=b$, then $k_1=n,k_2=0,(X,Y)=yM.$ The generator matrix of the residue code is $(I,M)$ if $y=a,b$ and $(I,\mathbf{0})$ if $y=c.$

 \item If $x=c$, then, $k_1=0,k_2=n,y=c,Z=M.$  The generator matrix of the torsion code is $G_2=(I,M)$.

\end{itemize}

The (additive) generator matrix of the corresponding additive $\F_4$ code is
$$G'=\left(\begin{array}{c}
 \phi(aG)\\
\phi(bG)
\end{array} \right),$$
where  $\phi$ is as defined in the preceding section.

{\bf Remarks:}
\begin{itemize}
 \item If $y=c$, then $C(M)$ has minimum distance $1.$ In the examples, we shall assume that $y=a,$ or $y=b.$

 \item If $x=c$, then we find that $\phi(M)$  is a linear code over $\F_4$ given by $\phi(M) =\langle (0,M) \rangle.$ We will avoid this case as well.

 \item Now if both $x,y$ are in $\{a,b\},$ then we find that $\phi(M)$ is a linear code over $\F_4$ given by $\phi(M) =\langle (I,M) \rangle.$
     %This is the case assumed in the rest of the paper.
\end{itemize}

%%%%%%%%%%%%%%%%%%%%%%%%%%%%%%%%%%%%%%%%%
\section{Combinatorial matrices}

\subsection{Two-class association schemes}

From now on, we can discuss two-class association schemes which will play an important role in $M$.

There are two kinds of two-class association schemes. One is a Strongly Regular Graph (SRG), where the two adjacency matrices satisfy $A_i = A_i^T$ for $i=1,2$.
Here, $A_2$ satisfies $A_2 = J-I-A_1:=\overline{A_1}$. An important database of SRGs is \cite{P}.

A classical construction of a SRG is the Paley graph. It is constructed from quadratic residues in $\mathbb F_q$, where $q \equiv 1 \pmod 4$ and $A=Q=N$.
The parameters are $(q, \frac{q-1}{2}, \frac{q-3}{4}, \frac{q+1}{4})$. The example of $q=5$ is the pentagon graph.

Another class is a Doubly Regular Tournament (DRT), which is equivalent to a skew Hadamard matrix \cite{B}. The adjacency matrix $A_2$ satisfies $A_2 = J-I-A_1:=\overline{A_1}$. Note that $A_1^T = \overline{A_1}$.

From now on, let $A=A_1$.

\begin{lem}[\cite{T}]\label{lem:1}
If $G$ is an SRG, then we have
\[
AA^T = A^2 = \kappa I + \Lambda A + \Mu \overline{A}.
\]

If $G$ is a DRT, then we have
\[
AA^T = \kappa I + (\kappa-1-\Lambda) A + (\kappa-\Mu) \overline{A}.
\]
\end{lem}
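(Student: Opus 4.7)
The plan is to dispatch the two cases separately, using in each case the basic identity $J = I + A + \overline{A}$ together with the defining combinatorial relation for the scheme.

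For the SRG case the statement is essentially immediate. The adjacency matrix $A$ of an undirected graph is symmetric, so $AA^T = A^2$ for free. The identity $A^2 = \kappa I + \Lambda A + \Mu \overline{A}$ is then just the classical characterisation of a strongly regular graph with parameters $(n,\kappa,\Lambda,\Mu)$: entry $(i,i)$ of $A^2$ counts closed walks of length $2$, giving the valency $\kappa$; entry $(i,j)$ for adjacent $i,j$ counts common neighbours, which equals $\Lambda$; and entry $(i,j)$ for non-adjacent distinct $i,j$ equals $\Mu$. I would just cite this as the definition of an SRG.

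For the DRT case, I would start from the two structural identities that characterise a regular tournament. Because for every pair $i \neq j$ exactly one of the arcs $i\to j$, $j\to i$ is present, we have $A + A^T = J - I$, i.e.\ $A^T = \overline{A}$. Because the tournament is regular of out-degree $\kappa$, we have $AJ = \kappa J$. The doubly regular condition amounts to saying that $A^2$ has a constant value $\Lambda$ on the support of $A$ and a constant value $\Mu$ on the support of $\overline{A}$, while the diagonal of $A^2$ is zero since a tournament contains no $2$-cycles. Hence
\begin{equation*}
A^2 = \Lambda A + \Mu \overline{A}.
\end{equation*}

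From here the lemma is a short algebraic manipulation. I would write
\begin{equation*}
AA^T = A\overline{A} = A(J - I - A) = \kappa J - A - A^2 = \kappa J - (1+\Lambda)A - \Mu \overline{A},
\end{equation*}
and then substitute $J = I + A + \overline{A}$ to regroup:
\begin{equation*}
AA^T = \kappa I + \kappa A + \kappa \overline{A} - (1+\Lambda)A - \Mu \overline{A} = \kappa I + (\kappa - 1 - \Lambda)A + (\kappa - \Mu)\overline{A},
\end{equation*}
which is the claimed formula.

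There is no real obstacle here; the argument is a routine manipulation once the defining identities $A^T = \overline{A}$, $AJ = \kappa J$, and $A^2 = \Lambda A + \Mu \overline{A}$ are in hand. The only point worth care is to make explicit that in the DRT setting $\Lambda$ and $\Mu$ are the intersection numbers coming from $A^2$ (counting directed paths of length $2$) rather than the SRG parameters, so that the two formulas in the statement are internally consistent.
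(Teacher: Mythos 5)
Your proof is correct. The paper itself offers no argument for this lemma --- it is quoted from the reference [T] without proof --- so there is nothing to compare line by line; your derivation is a valid, self-contained justification. The SRG half is indeed just the definition of strong regularity together with $A=A^T$, and your DRT computation
$AA^T = A\overline{A} = \kappa J - A - A^2$ followed by the substitutions $A^2=\Lambda A+\Mu\overline{A}$ and $J=I+A+\overline{A}$ lands exactly on $\kappa I+(\kappa-1-\Lambda)A+(\kappa-\Mu)\overline{A}$. The only organizational difference worth noting is one of direction: you take $A^2=\Lambda A+\Mu\overline{A}$ (the paper's Equation (5)) as the defining property of the DRT parameters and deduce the lemma, whereas the paper states the lemma first and presents Equation (5) as its consequence; the two formulations are equivalent under the same one-line manipulation, and your closing remark correctly flags that $\Lambda,\Mu$ must be read as the path-counting intersection numbers for this to be consistent.
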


Using the same parameters in the above lemma, both of them satisfy the equation
$$AJ=JA=\kappa J,$$
and for SRGs, we have
\begin{equation}\label{eq:4}
A^2=\kappa I+\Lambda A+\Mu (J-I-A),
\end{equation}
for DRTs, we have
\begin{equation}\label{eq:5}
A^2=\Lambda A+\Mu(J-I-A).
\end{equation}

We connect these parameters to that of the matrix $M$ of the preceding section. The trivial proof is omitted.

{\prop Keep the notation of Lemma $\ref{lem:1}$. If $M$ is the adjacency matrix of $G$ with parameters $(n,\kappa,\Lambda,\Mu)$ then

\begin{itemize}
 \item in the SRG case $\lambda=\kappa-\Mu, \mu=\Mu,\, \nu= \Lambda-\Mu$,
 \item in the DRT case $\lambda=\Mu, \mu=\kappa-\Mu,\, \nu=\Mu- \Lambda-1$.
\end{itemize}

}

We can use the database of two class association schemes from Hanaki and Miyamoto's database~\cite{HM}. In particular there is a classification of DRT of sizes up to 40.

\subsection{Pure and double circulant codes from two-class association schemes}

We can also follow the construction method from~\cite{T}. Let $Q_E(r,s,t)=rI + sA + t\overline{A}$, where $r,s,t \in E$, where $A$ is an adjacency matrix of a SRG or a DRT.
 Let $C(Q_E(r,s,t))$ be a code of length $2n$ with a generator matrix of the form

$$G=(aI, Q_E(r, s, t))=(aI,rI + sA + t\overline{A}).$$

This construction can be called the {\it pure} construction.

\medskip

First we consider $r=0$ and $s,t \in \{a, b \}$.
The code $C(Q_E(0, s, t))$ of length $2n$ has  generator matrix of the form
$$G=(aI, Q_E(0, s, t))=(aI,sA + t\overline{A}),$$
where $A$ is an adjacency matrix of a SRG or a DRT.

{\thm Suppose $A$ is an adjacency matrix of a SRG or a DRT.

\begin{enumerate}
\renewcommand{\labelenumi}{$(\theenumi)$}
\item If $n \ge 7$, then the minimum distance of $C(Q_E(0, s, t))$ is exactly $4$.

    \item
    If $3 \le n < 7$, then the minimum distance of $C(Q_E(0, s, t))$ is $2$ or $3$.
\end{enumerate}
}

\begin{proof}
Due to symmetry between $A$ and $\overline{A},$ we may assume $s=a$ and $t=b$, or $s=t=a$. We only consider the case $s=a$ and $t=b$ because the other case $s=t=a$ can be done similarly.
Note that $aG=a(aI,aA + b\overline{A})=(aI, aA+ a\overline{A})=(aI, a(A+\overline{A}))$.
Since $A+\overline{A}=J-I$, $aG=(aI, a(J-I))$. It is easy to see that the minimum distance of the code generated by $(aI, a(J-I))$ is 4 if $n \ge 3$. Hence $G$ generates a codeword of weight 4 if $n \ge 3$. Each row of $G$ and $aG$ has weight at least 4 if $n \ge 7$. Hence the first statement of the theorem follows. If $3 \le n <7$, then $G$ has weight $2$ or $3$. Hence the second statement  follows.
\end{proof}

Similarly, we have the following theorem.

{\thm Suppose $A$ is an adjacency matrix of a SRG or a DRT. If $r \ne 0$, and $s,t \in \{a, b \}$, then the following statements hold.

\begin{enumerate}
\renewcommand{\labelenumi}{$(\theenumi)$}
\item If $n \ge 7$ and $r=c$, then the minimum distance of $C(Q_E(r, s, t))$ is exactly $4$.

    \item
    If $r$ is either $a$ or $b$, then the minimum distance of $C(Q_E(r, s, t))$ is $2$.
\end{enumerate}
}

Therefore if $n \ge 7$, it is reasonable to consider the following three constructions
(i) $C(Q_E(0, a, 0))$, (ii) $C(Q_E(a, a, 0))$, or
(iii) $C(Q_E(c, a, 0))$, where replacing $a$ into $b$ gives the same result.

Note that Case (i) and Case (ii) are the same construction as $C(M)$ with $x=a$ and $y=a$ in Section \ref{sec:3} by taking $M=A$ and $M=A+I$, respectively. Therefore, we can apply these two cases to various SRGs and DRTs.

\medskip

Next we can consider the {\it bordered} construction as follows.

$$
B_E(r,s,t)=
\left(
\begin{array}{c|ccc|c|c}
a & 0& \dots & 0 & 0 & a ~ \dots ~ a \\
\hline
0 &  &       &   & a &     \\
\vdots &     &  aI &  & \vdots  & Q_E(r,s,t)  \\
0 &  &       &   & a &            \\
\end{array}
\right).
%{\mbox{ or }}
%\left(
%\begin{array}{c|ccc|c|c}
%c & 0& \dots & 0 & 0 & c ~ \dots ~ c \\
%\hline
%0 &  &       &   & c &     \\
%\vdots &     &  aI &  & \vdots  & Q_E(r,s,t)  \\
%0 &  &       &   & c &            \\
%\end{array}
%\right)
$$
Just like for the pure construction, we can distinguish three cases
(i) $Q_E(0, a, 0)$, (ii) $Q_E(a, a, 0)$, or
(iii) $Q_E(c, a, 0)$.

\begin{lem}
The codes in these two constructions with Case $({\rm i})$ and Case $({\rm iii})$ are the same.
\end{lem}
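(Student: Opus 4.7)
My plan is to argue that $C(B_E(0,a,0))$ and $C(B_E(c,a,0))$ are QSD codes sharing the same residue code, and then invoke the uniqueness statement in Theorem \ref{E} to conclude they coincide. The crucial ring-theoretic fact, visible directly from the multiplication table of $E$, is that the reduction map $\alpha: E \to \F_2$ sends $c$ to $0$. Applied entry-wise to the two generator matrices, the extra $cI$ block appearing inside $Q_E(c,a,0) = cI + aA$ is annihilated, so
$$\alpha(B_E(c,a,0)) = \alpha(B_E(0,a,0)).$$
Consequently the $\F_2$-row spaces of the two reduced matrices — which are the respective residue codes — coincide.

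The second ingredient is the companion identity $r \cdot c = 0$ for every $r \in E$, which can also be read off from the table (the column indexed by $c$ is entirely zero). This makes the $cI$ block ineffective at the level of $E$-linear combinations: taking any combination $\sum_i r_i R_i'$ of rows of $B_E(c,a,0)$, each contribution coming from the $cI$ block is a term of the form $r_j(c e_j^T) = 0$. Hence $\sum_i r_i R_i' = \sum_i r_i R_i$, the analogous combination of rows of $B_E(0,a,0)$. In particular, the two constructions produce $E$-submodules of the same size, and any self-orthogonality relation satisfied by one is satisfied by the other.

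Combining these two observations, both constructions yield QSD codes (of the appropriate length and size) sharing the same residue code $B$. Invoking Theorem \ref{E}, both codes must equal $aB + cB^\perp$, and are therefore equal. I expect the main technical point to be the clean verification that Case (iii) really produces a QSD code: a naive reading of the generator matrix $B_E(c,a,0)$ suggests that the $c$-entries might enlarge the code beyond size $2^{n+1}$, but the annihilator identity $r \cdot c = 0$ is precisely what prevents this, so the QSD property for Case (iii) reduces to that for Case (i), which is already established.
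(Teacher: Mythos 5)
Your second paragraph already contains the paper's entire proof. Since the column of the multiplication table indexed by $c$ is zero, $rc=0$ for every $r\in E$; hence for any coefficient vector $\mathbf{x}\in E^{n}$ (resp.\ $E^{n+1}$ in the bordered case) one has
$$\mathbf{x}\,G_{(\mathrm{iii})}=\mathbf{x}\bigl(G_{(\mathrm{i})}+(\mathbf{0}\mid cI)\bigr)=\mathbf{x}\,G_{(\mathrm{i})},$$
so the two sets of codewords are literally identical. Once you have written $\sum_i r_iR_i'=\sum_i r_iR_i$ for all choices of the $r_i$, the lemma is proved and nothing further is needed; this is exactly the argument the paper gives for both the pure and the bordered matrices.

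The scaffolding you wrap around this computation, however, has a genuine gap. You propose to conclude via Theorem \ref{E}: both constructions give QSD codes with the same residue code, hence both equal $aB+cB^{\perp}$. But the lemma does not assume the codes are QSD, and for a general SRG or DRT they are not: self-orthogonality (hence QSD-ness) of $C(Q_E(0,a,0))$ holds only under the parameter constraints recorded later in Tables \ref{t:2} and \ref{t:3} (e.g.\ $\kappa=1$, $\Lambda=\Mu=0$ in the pure SRG case). So the premise ``both constructions yield QSD codes'' is false in the generality in which the lemma is stated, and the uniqueness clause of Theorem \ref{E} is unavailable. Moreover, even in the cases where the codes are QSD, you only establish QSD-ness of Case (iii) by reducing it to Case (i) via the very identity $rc=0$ that already proves the two codes are equal, so the detour through residue codes buys nothing. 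Strip the proof down to the $rc=0$ computation and state the conclusion directly as equality of the sets $\{\mathbf{x}G_{(\mathrm{i})}\}$ and $\{\mathbf{x}G_{(\mathrm{iii})}\}$.
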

\begin{proof}
In Case $({\rm i})$, $Q_E(0,a,0)=aA$, and the generator matrix $G_{({\rm i})}=(aI|aA)$ in pure construction.
So, the code $C_{({\rm i})}=\{\mathbf{x}G_{({\rm i})}|\mathbf{x}\in E^{n}\}$.
In Case $({\rm iii})$, $Q_E(c,a,0)=cI+aA$, and the generator matrix $G_{({\rm iii})}=(aI|cI+aA)$ in pure construction.
So, the code $C_{({\rm iii})}=\{\mathbf{x}G_{({\rm iii})}|\mathbf{x}\in E^{n}\}$.
Since
$$\begin{aligned}
\mathbf{x}G_{({\rm iii})}&=\mathbf{x}(G_{({\rm i})}+(\mathbf{0}|cI))\\
&=\mathbf{x}G_{({\rm i})}+\mathbf{x}(\mathbf{0}|cI)\\
&=\mathbf{x}G_{({\rm i})},
\end{aligned}
$$
we have $C_{({\rm i})}=C_{({\rm iii})}$.

For bordered construction in Case (i), we have
$C^\prime_{({\rm i})}=\{\mathbf{y}G^\prime_{({\rm i})}|\mathbf{y}\in E^{n+1}\},$
where
$$G^\prime_{({\rm i})}=\left(
\begin{array}{c|ccc|c|c}
a & 0& \dots & 0 & 0 & a ~ \dots ~ a \\
\hline
0 &  &       &   & a &     \\
\vdots &     &  aI &  & \vdots  & aA  \\
0 &  &       &   & a &            \\
\end{array}
\right),$$
and in Case (iii), we have $C^\prime_{({\rm iii})}=\{\mathbf{y}G^\prime_{({\rm iii})}|\mathbf{y}\in E^{n+1}\},$
where
$$G^\prime_{({\rm iii})}=\left(
\begin{array}{c|ccc|c|c}
a & 0& \dots & 0 & 0 & a ~ \dots ~ a \\
\hline
0 &  &       &   & a &     \\
\vdots &     &  aI &  & \vdots  & cI+aA  \\
0 &  &       &   & a &            \\
\end{array}
\right).$$
Let
$$A^\prime=\left(
\begin{array}{c|ccc|c|c}
0 & 0& \dots & 0 & 0 & 0 ~ \dots ~ 0 \\
\hline
0 &  &       &   & 0 &     \\
\vdots &     &  \mathbf{0} &  & \vdots  & cI  \\
0 &  &       &   & 0 &            \\
\end{array}
\right),$$
then
$$\begin{aligned}
\mathbf{y}G^\prime_{({\rm iii})}&=\mathbf{y}(G^\prime_{({\rm i})}+A^\prime)\\
&=\mathbf{y}G^\prime_{({\rm i})}+\mathbf{y}A^\prime\\
&=\mathbf{y}G^\prime_{({\rm i})}.
\end{aligned}
$$
Therefore, $C^\prime_{({\rm i})}=C^\prime_{({\rm iii})}$.
\end{proof}

\begin{examp}
{\rm
It is well known that there is unique DRT of order $11$. The pure construction with $Q_E(0, a, 0)$ gives a QSD $[22, 11, 6]$ code over $E$. The bordered construction with $Q_E(a, a, 0)$ gives a QSD $[24, 12, 8]$ code over $E$. The minimum distances of these codes are justified by Theorem~$\ref{thm-min-dinstance}$.
}
\end{examp}

\begin{lem}\label{l:22}
\begin{enumerate}
\renewcommand{\labelenumi}{$(\theenumi)$}
\item For SRGs we have $$
Q_E(r,s,t)Q_E(r,s,t)^T=\omega_1I+\omega_2A+\omega_3\overline{A},
$$
where
$\omega_1=(r^2+s^2\kappa-t^2-t^2\kappa+t^2v)$,
$\omega_2=(rs+sr+s^2\Lambda-st-ts-st\Lambda-ts\Lambda+t^2\Lambda+st\kappa+ts\kappa+t^2v-2t^2\kappa)$,
$\omega_3=(rt+tr+s^2\Mu-st\Mu-ts\Mu+t^2\Mu+st\kappa+ts\kappa+t^2v)$.

\item For DRTs we have $$
Q_E(r,s,t)Q_E(r,s,t)^T=\omega_1^\prime I+\omega_2^\prime A+\omega_3^\prime \overline{A},
$$
where
$\omega_1^\prime=(r^2+(s^2+t^2)\kappa)$,
$\omega_2^\prime=(rt+sr+s^2(\kappa-1-\Lambda)+t^2(\kappa-\Mu)+st\Lambda+st\Mu)$,
$\omega_3^\prime=(tr+rs+s^2(\kappa-\Mu)+t^2(\kappa-1-\Lambda)+st\Mu+st\Lambda)$.
\end{enumerate}
\end{lem}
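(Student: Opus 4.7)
The plan is to expand $Q_E(r,s,t)\,Q_E(r,s,t)^T$ directly and then reduce the nine resulting cross-terms to the span of $\{I, A, \overline{A}\}$ using the defining identities of the association scheme. The non-unital, non-commutative nature of $E$ forces us to preserve the order of scalar products, so we keep $st$ and $ts$ separate throughout and use only the characteristic-two relation $2x=0$ at the end.

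First, I would write $Q_E^T = rI + sA^T + t\overline{A}^T$ and specialize. For SRG the matrices are symmetric, so $Q_E^T = Q_E$; for DRT, $A^T=\overline{A}$, so $Q_E^T = rI + s\overline{A} + tA$. Multiplying gives, respectively,
\begin{align*}
Q_E Q_E^T &= r^2 I + (rs+sr)A + (rt+tr)\overline{A} + s^2 A^2 + (st+ts)A\overline{A} + t^2 \overline{A}^2, \\
Q_E Q_E^T &= r^2 I + srA + rtA + rs\overline{A} + tr\overline{A} + s^2 A\overline{A} + t^2 \overline{A}A + st A^2 + ts\overline{A}^2.
\end{align*}

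Second, I would reduce each of $A^2,\ A\overline{A},\ \overline{A}A,\ \overline{A}^2$ to the basis $\{I,A,\overline{A}\}$. The inputs are Lemma~\ref{lem:1} (giving $A^2$), the identity $J = I + A + \overline{A}$, the regularity relation $AJ = JA = \kappa J$ (which implies that $A$ and $\overline{A}$ commute), and $J^2 = nJ$. These yield $A\overline{A} = AJ + A + A^2$ and $\overline{A}^2 = J^2 + I + A^2$ in characteristic two, which after substituting Lemma~\ref{lem:1} become linear combinations of $I$, $A$, $\overline{A}$ with coefficients in $\F_2[\kappa,\Lambda,\Mu,n]$.

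Third, I would collect the coefficients of $I$, $A$, $\overline{A}$ and match them against the stated $\omega_i$ and $\omega_i'$. Two simplifications streamline this: (i) since $2a = 2b = 0$, all signs disappear and the $2\kappa I$ cross-terms vanish; (ii) in the DRT case $n$ is odd (in fact $n\equiv 3\pmod 4$) and $\Mu - \Lambda = 1$, which kills the $ts(n+1)I$ contribution to $\omega_1'$ and merges the $t^2$-coefficients of $A$ and $\overline{A}$ into the form displayed in the lemma.

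The main obstacle is purely bookkeeping: one must be scrupulous about the order of the scalar products $st$ versus $ts$ (and of $rs,sr,rt,tr$) as they propagate through the nine expansion terms, because in $E$ these are genuinely distinct elements. Beyond that, the verification is mechanical and can be organized as a single table whose rows are the nine products $xy \in \{I,A,\overline{A}\}\times\{I,A,\overline{A}\}$ and whose columns record the resulting coefficients of $I$, $A$, and $\overline{A}$.
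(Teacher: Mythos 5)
Your proposal is correct and follows exactly the route the paper intends: the paper's proof is the one-line remark that the identity is ``straightforward by Equations (4) and (5) and Lemma 1,'' and your direct expansion of the nine products, reduction of $A^2$, $A\overline{A}$, $\overline{A}A$, $\overline{A}^2$ to the span of $\{I,A,\overline{A}\}$ via $J=I+A+\overline{A}$, $AJ=JA=\kappa J$, $J^2=nJ$, and the scheme identities is precisely that computation, carried out with appropriate care for the non-commutativity of $E$ and for characteristic two. Your insistence on tracking $st$ versus $ts$ is well placed (it in fact exposes a small notational slip in the displayed $\omega_2'$, $\omega_3'$, harmless in the paper's applications where $t=0$), and the DRT simplifications you invoke ($n$ odd, $\Mu-\Lambda=1$) are exactly what is needed to match the stated coefficients.
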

\begin{proof}
It is straightforward by Equations \ref{eq:4} and \ref{eq:5} and Lemma \ref{lem:1}.
\end{proof}

We will discuss the weight of rows of generator matrices in Case (i) and Case (ii).
Then, the conditions of QSD and Type IV can be confirmed.
By the form of generator matrices in pure construction and bordered construction, the code is QSD if it is self-orthogonal.
%efine $\alpha:E\rightarrow \F_2$, where $\alpha(0)=0,\alpha(a)=\alpha(b)=1$.
The following remark gives when the code is self-orthogonal and Type IV.
\begin{rem}
{\rm
For Cases (i) and (ii), we have the following observations.

\begin{itemize}
\item \underline{pure construction with SRGs}

For the code $P_E(r,s,t)$ to be self-orthogonal, we need $$(aI| Q_E(r, s, t))(aI | Q_E(r, s, t))^T = \mathbf{0}.$$
That is we need $Q_E(r, s, t)Q_E(r, s, t)^T = -aI$.
By Lemma $\ref{l:22}$ (1), we compute the parameters $\kappa,\Lambda,\Mu$ of self-orthogonal (QSD) codes in Table $\ref{t:2}$.

The weight of any row of $Q_E(r,s,t)$ is related to the coefficient of $I$, where $I$ is in Lemma $\ref{l:22}$ (1).
So, the weight of any row of $(aI|Q_E(r,s,t))$ is $$1+\alpha(r^2)+\alpha(s^2)\kappa+\alpha(t^2)(n-\kappa-1),$$
that is, $1+\kappa$ in Case (i) and $2+\kappa$ in Case (ii).
Therefore, a QSD code is Type IV if $$1+\alpha(r^2)+\alpha(s^2)\kappa+\alpha(t^2)(n-\kappa-1)=0\ ({\rm mod}\ 2),$$
that is, $1+\kappa=0\ ({\rm mod}\ 2)$ in Case (i) and $2+\kappa=0\ ({\rm mod}\ 2)$ in Case (ii).
Then we have the conditions of Type IV in Table $\ref{t:2}$.

\item \underline{bordered construction with SRGs}

Similar to the pure construction, we need
$$
B_E(r,s,t)B_E(r,s,t)^T=\mathbf{0}.
$$
Then we have
$$
\begin{aligned}
a(1+n)&=0\\
a(r+s\kappa+t(n-\kappa-1))&=0\\
aI+aJ+Q_E(r,s,t)Q_E(r,s,t)^T&=\mathbf{0}.
\end{aligned}
$$
The first equation is the product of the top row with itself.
The second equation is the product of the top row with any other row,
and the third equation ensures that the other rows are orthogonal to each other.
The results of the calculation by Lemma $\ref{l:22}$ (1) are in Table $\ref{t:2}$.

And this code is Type IV if $$
\begin{aligned}
\alpha(a)(1+n)&=0\ ({\rm mod}\ 2),\\
\alpha(r)+\alpha(s)\kappa+\alpha(t)(n-\kappa-1)&=0\ ({\rm mod}\ 2).
\end{aligned}
$$
We also have the results in Table $\ref{t:2}$.
\begin{table}[!htbp]
\caption{Conditions of QSD and Type IV with SRGs}\label{t:2}
\begin{tabular}{|c|c|c|c|c|c|c|}
\hline
\multirow{2}{*}{$r$} & \multirow{2}{*}{$s$} & \multirow{2}{*}{$t$} & \multicolumn{2}{c|}{Pure}          & \multicolumn{2}{c|}{Bordered}           \\ \cline{4-7}
                     &                      &                      & QSD                      & Type IV & QSD                           & Type IV \\ \hline
0                    & $a$                  & 0                    & $\kappa=1,\Lambda=\Mu=0$ & Always  & $\kappa=0,n=\Lambda=\Mu=1$ & Always  \\ \hline
$a$                  & $a$                  & 0                    & $\kappa=\Lambda=\Mu=0$   & Always  & $n=\Lambda=\Mu=\kappa=1$      & Always  \\ \hline
\end{tabular}
\end{table}
\item \underline{pure and bordered construction with DRTs}

By using the same arguments as these two constructions with SRGs and Lemma $\ref{l:22}$ (2),
then we have the results in Table \ref{t:3}.

\begin{table}[!htbp]
\caption{Conditions of QSD and Type IV with DRTs}\label{t:3}
\begin{tabular}{|c|c|c|c|c|c|c|}
\hline
\multirow{2}{*}{$r$} & \multirow{2}{*}{$s$} & \multirow{2}{*}{$t$} & \multicolumn{2}{c|}{Pure}          & \multicolumn{2}{c|}{Bordered}        \\ \cline{4-7}
                     &                      &                      & QSD                      & Type IV & QSD                        & Type IV \\ \hline
0                    & $a$                  & 0                    & $\kappa=\Mu=1,\Lambda=0$ & Always  & $\kappa=\Lambda=0,n=\Mu=1$ & Always  \\ \hline
$a$                  & $a$                  & 0                    & $\kappa=\Lambda=0,\Mu=1$ & Always  & $n=\Mu=\kappa=1,\Lambda=0$ & Always  \\ \hline
\end{tabular}
\end{table}
\end{itemize}
}
\end{rem}

We computed the Hamming weight and Lee weight of some codes.
These examples are from \cite{HM,P} and MAGMA databases of SRGs \cite{DS}.

\begin{thm}
There are  QSD codes over $E$ with the following parameters.
\begin{enumerate}
\renewcommand{\labelenumi}{$(\theenumi)$}
\item Based on SRGs, there are QSD codes with parameters $(2n, d)$, where $2n$ is the length of the code, and $d$ is the minimum distance.
    $$(32,8),(56,8),(70,10),(72,12),(80,12),(92,12).$$

\item Based on DRTs, there are QSD codes with parameters $(2n, d)$, where $2n$ is the length of the code, and $d$ is the minimum distance.
    $$(22,7),(24,8),(38,8),(40,8).$$
\end{enumerate}
\end{thm}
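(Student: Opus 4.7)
The plan is to exhibit, for each listed parameter pair $(2n,d)$, an explicit SRG or DRT of order $n$ together with a choice of $(r,s,t)$ and a choice of pure or bordered construction that yields a QSD code of the claimed minimum distance. Since quasi self-duality and the Type IV property are already characterized in closed form by the tables in the preceding remark, the proof is essentially a catalogue: match each target length to an available combinatorial matrix, and verify the minimum distance.

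First I would fix the construction recipe. For the SRG entries I would use, for each required $n \in \{16, 28, 35, 36, 40, 46\}$, a strongly regular graph of order $n$ whose parameters $(\kappa,\Lambda,\Mu)$ are compatible with a row of Table~\ref{t:2}. Concretely, the pure construction with $(r,s,t)=(a,a,0)$ requires $\kappa=\Lambda=\Mu=0$, which is vacuous, so any SRG does not qualify there; the natural choice is the pure construction with $(r,s,t)=(0,a,0)$ applied to SRGs with $(\kappa,\Lambda,\Mu)$ chosen to meet the conditions, or the bordered construction on SRGs with $\kappa=0$, $n=\Lambda=\Mu=1$. In practice, once the construction is fixed, I would loop over the SRGs available in \cite{P} and the MAGMA SRG library to locate graphs at the required orders $16,28,35,36,40,46$ (these are exactly half of the target lengths $32,56,70,72,80,92$). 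For the DRT entries, the required orders are $11,12,19,20$; a DRT of order $11$ is unique, DRTs of orders $19$ and $20$ come from quadratic residues in $\F_{19}$ and a suitable skew Hadamard matrix of order $20$, respectively, and the $24$ length uses the bordered construction on the order-$11$ DRT as in the worked example already given.

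Next I would carry out the verification in three steps for each case. Step one: check that the chosen $(\kappa,\Lambda,\Mu)$ falls into the appropriate row of Table~\ref{t:2} or Table~\ref{t:3}, which immediately yields self-orthogonality by the remark, and QSD by Theorem~\ref{th:6} (since $x=a$, the matrix $G$ has $n$ independent rows over $E$ automatically). Step two: identify the residue and torsion codes using the description at the end of Section~3; for $x=y=a$ and pure construction these are $\mathrm{res}(C)=\langle (I,M)\rangle$ and $\mathrm{tor}(C)=\mathrm{res}(C)^\perp$ by Theorem~2. Step three: bound the minimum distance from above via Theorem~\ref{thm-min-dinstance}, $d(C)\le \min\{d(B),d(B^\perp)\}$, and compute $d(B)$ and $d(B^\perp)$ in MAGMA using the additive $\F_4$ representation $\phi(C)$ from Section~2; the computed distance matches the claimed value $d$, which establishes the lower bound as well.

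The main obstacle is not theoretical but logistical: pinning down, for each of the ten parameter sets, a specific combinatorial matrix in the cited databases whose parameters actually satisfy the QSD conditions \emph{and} whose associated residue code has the right dual distance. For the SRG lengths $70,72,80,92$ where several non-isomorphic SRGs of the same order exist, one may need to test a handful of candidates before landing on one that gives distance $10$ or $12$ rather than a smaller value. I would organize this as a short table inside the proof listing, for each $(2n,d)$, the chosen SRG/DRT (with its parameters), the construction (pure or bordered), and the triple $(r,s,t)$, and then appeal to a MAGMA computation to certify $d$. Everything else — self-orthogonality, cardinality $2^n$, and the $(r,s,t)$ admissibility — follows from Theorems~\ref{th:55}--\ref{th:7} and the remark preceding the theorem.
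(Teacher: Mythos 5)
Your proposal matches the paper's approach: the paper offers no deductive argument for this theorem, only a catalogue (Tables \ref{ta:4} and \ref{ta:5}) pairing each target length with an explicit SRG or DRT, the construction used (pure or bordered, Case (i) or (ii)), and computer-verified Hamming/Lee distances --- exactly the table-plus-MAGMA certification you describe, with QSD guaranteed by the form of the generator matrix and the upper bound on $d$ coming from Theorem~\ref{thm-min-dinstance}. Two small corrections to your search plan: the conditions such as $\kappa=\Lambda=\Mu=0$ in Table~\ref{t:2} are congruences modulo $2$, not integer equalities, so the pure $(a,a,0)$ case is far from vacuous and is in fact the one used for most of the SRG examples (e.g.\ the $(16,6,2,2)$ graph for length $32$); and the lengths $32$, $56$, $92$ also arise from bordered constructions on SRGs of orders $15$, $27$, $45$ (so length $92$ does not require an SRG of order $46$), just as lengths $24$ and $40$ come from the bordered construction on the DRTs of orders $11$ and $19$.
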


We display these results in Table \ref{ta:4} and Table \ref{ta:5}.

\begin{table}[!htpb]
\caption{Weights of some QSD codes of SRGs}\label{ta:4}
\begin{tabular}{|c|c|c|c|c|c|}
\hline
\multicolumn{1}{|l|}{Construction} & Cases                 & $(n-\kappa-\Lambda-\Mu)$ & Code Length & Hamming  & Lee  \\ \hline
\multirow{6}{*}{Pure}              & (i)                   & $(36-15-6-6)$            & 72          & 12             & 12         \\ \cline{2-6}
                                   & \multirow{5}{*}{(ii)} & $(16-6-2-2)$             & 32          & 8              & 8          \\ \cline{3-6}
                                   &                       & $(28-12-6-4)$            & 56          & 8              & 8          \\ \cline{3-6}
                                   &                       & $(35-16-6-8)$            & 70          & 10             & 10         \\ \cline{3-6}
                                   &                       & $(36-14-4-6)$            & 72          & 12             & 12         \\ \cline{3-6}
                                   &                       & $(40-12-2-4)$            & 80          & 12             & 12         \\ \hline
\multirow{3}{*}{Bordered}          & \multirow{3}{*}{(i)}  & $(15-6-1-3)$             & 32          & 8              & 8          \\ \cline{3-6}
                                   &                       & $(27-10-1-5)$            & 56          & 8              & 8          \\ \cline{3-6}

                                   &                       & $(45-12-3-3)$            & 92          & 12             & 12         \\ \hline
\end{tabular}
\end{table}

\begin{table}[!htpb]
\caption{Weights of QSD some codes of DRTs}\label{ta:5}
\begin{tabular}{|c|c|c|c|c|c|c|c|c|}
\hline
Construction              & $n$  &Length & \multirow{5}{*}{\begin{tabular}[c]{@{}c@{}}Case\\ (i)\end{tabular}} & Hamming & Lee & \multirow{5}{*}{\begin{tabular}[c]{@{}c@{}}Case\\ (ii)\end{tabular}} & Hamming & Lee \\ \cline{1-3} \cline{5-6} \cline{8-9}
\multirow{2}{*}{Pure}     & $11$ & $22$        &                                                                     & 6       & 6   &                                                                      & 7       & 7   \\ \cline{2-3} \cline{5-6} \cline{8-9}
                          & $19$ & $38$        &                                                                     & 8       & 8   &                                                                      & 7       & 7   \\ \cline{1-3} \cline{5-6} \cline{8-9}
\multirow{2}{*}{Bordered} & $11$ & $24$        &                                                                     & 7       & 7   &                                                                      & 8       & 8   \\ \cline{2-3} \cline{5-6} \cline{8-9}
                          & $19$ & $40$        &                                                                     & 8       & 8   &                                                                      & 8       & 8   \\ \hline
\end{tabular}
\end{table}

The  images by $\phi()$ of these codes are formally self-dual additive codes over $\F_4$  in \cite{HK}.
%%%%%%%%%%%%%%%%%%%%%%%%%%%%%%%%%%%%%%%%%%%%%%%%%%%%%%%%%%%%%%%%%%%%%%%%%%%%%%
\section{Conclusion}
In this work, we have constructed QSD and Type IV codes over the ring $E$ in the sense of \cite{E}. The construction method is based on the adjacency matrices of two-class association schemes,
in an analogue over $E$  of \cite{T} over finite fields. Formally self-dual additive codes over $\F_4$ were introduced in \cite{HK}. This little-known class of codes deserves further exploration.
In another direction, the construction methods we used can be explored over  the rings $H$ and $I$ of the Raghavendran classification \cite{F}.

\end{document}